\newtheorem{thm}{Theorem}
\newtheorem{prop}[thm]{Proposition}
\title{MAD roots for large trees}
\author{David Bryant and Michael Charleston}
\date{\today}
\begin{document}
\maketitle

\begin{abstract}
The Minimal Ancestral Deviation (MAD) method is a recently introduced procedure for estimating the root of a phylogenetic tree, based only on the shape and branch lengths of the tree. 
The method is loosely derived from the midpoint rooting method, but, unlike its predecessor, makes use of all pairs of OTUs when positioning the root. 
In this note we establish properties of this method and then describe a fast and memory efficient algorithm. 
As a proof of principle, we use our algorithm to determine the MAD roots for simulated phylogenies with up to 100,000 OTUs. The calculations take a few minutes on a standard laptop.
\end{abstract}

\section{Introduction: the MAD method}

Phylogenetic inference methods usually reconstruct \emph{unrooted} trees, requiring an additional step to infer the position of the root.
Rooting is simple if the trees are \emph{ultrametric} or \emph{clock-like}; that is, if the root-to-tip distance is uniform or close to it.
In many or most situations however, trees are not ultrametric: in particular, when there are cases of \emph{heterotachy} --- change of evolutionary rate on some but not all branches, leading to apparent non-uniformity in the root-to-tip distances.

Minimal Ancestor Deviation (MAD) is a method designed to accommodate heterotachy in phylogenetic trees and rapidly obtain quality estimates of their roots. It is inspired by mid-point rooting, but rather than using a single pair of OTUs to locate the root, it uses all pairs. The full details of the method, and its derivation, can be found in \cite{MADpaper}. 
Here we outline the basic idea, modifying the presentation from \cite{MADpaper} but not altering the method.

\begin{figure}[ht]
\begin{center}
\includegraphics[width=0.7\textwidth]{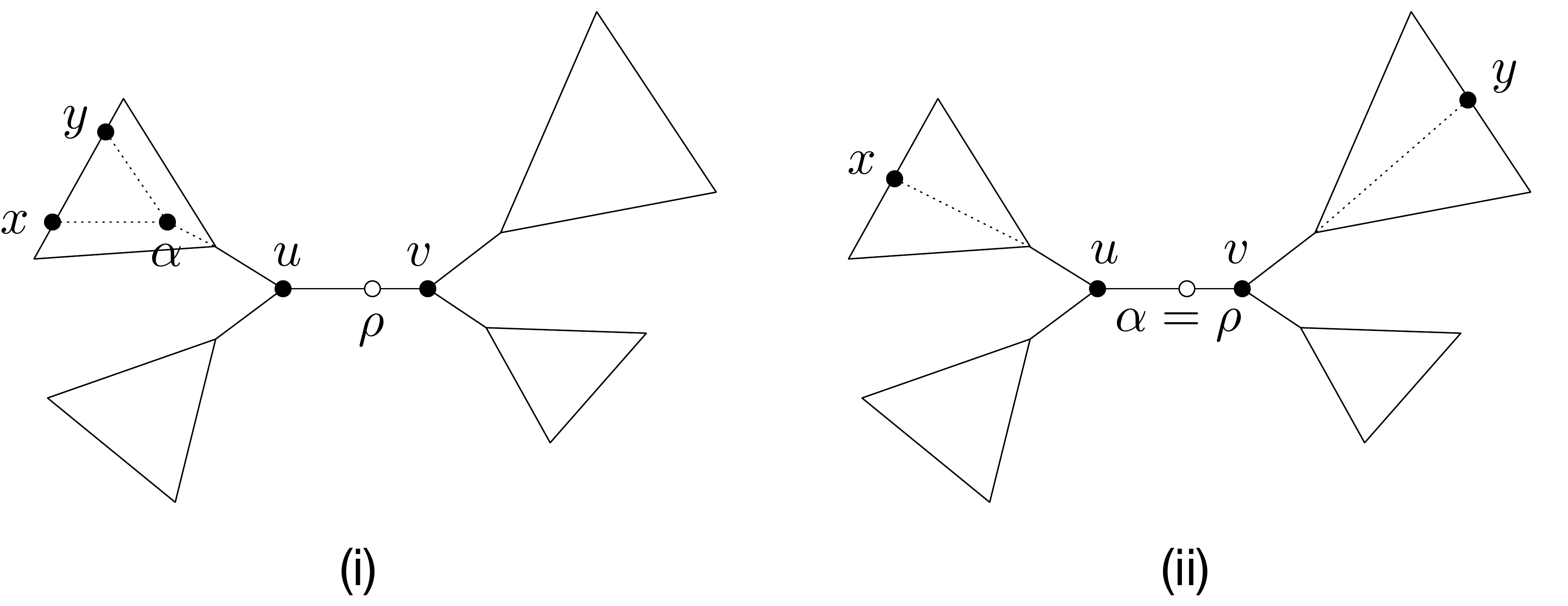}
\end{center}
\caption{\label{fig:MadPic} The relationship between pairs of OTUs and a putative root position. 
(i) The path from $x$ to $y$ does not pass through the putative root, and the least common ancestor $\alpha$ of $x$ and $y$ lies on the three-way intersection of the paths between $x$, $y$ and $\rho$. 
(ii) The path from $x$ to $y$ does pass through the putative root, and the least common ancestor equals $\rho$. }
\end{figure}

Consider the tree in Figure~\ref{fig:MadPic}. 
Let $d_{uv}$ denote the path length distance in the tree between any two nodes $u$ and $v$, where $u$ and $v$ could be OTUs (leaves), ancestral nodes, or positions along the branches.
Let $\rho$ denote the position of a putative root, where $\rho$ is located at a node or along a branch. 

Every pair of OTUs $x$ and $y$ has a unique least common ancestor $\alpha$ with respect to the putative root. 
There are two cases, depending on whether the $x$-$y$ path passes through $\rho$. 

Case 1: if the path from $x$ to $y$ does not pass through $\rho$, then $\alpha$ lies on the intersection of the paths from $x$ to $y$, $x$ to $\rho$, and $y$ to $\rho$ (Figure~\ref{fig:MadPic} (i)). 
If the tree was clock-like with root $\rho$, then we would have 
\[d_{x\alpha} = d_{y\alpha} = \frac{d_{xy}}{2}.\]
We define the \emph{pairwise deviation} in a manner similar to \cite{MADpaper} as
\[r_{xy;\alpha} = \left| \frac{2 d_{x\alpha} }{d_{xy} } - 1 \right| = \left| \frac{2d_{x\alpha} - d_{xy}}{d_{xy}}\right| = \left|\frac{2d_{x\alpha} - d_{x\alpha} - d_{y\alpha}}{d_{xy}}\right| = \left|\frac{d_{x\alpha} - d_{y\alpha}}{d_{xy}}\right|,\]
which is readily interpreted as the absolute proportional deviation of $\alpha$ from the half-way point between $x$ and $y$, and which is equal to
\[\left| \frac{d_{ x \rho } - d_{ y \rho} }{d_{xy} }  \right|\]
as the path from $\alpha$ to $\rho$ contributes to both $d_{x\rho}$ and $d_{y\rho}$.

Case 2: on the other hand, if as in Figure~\ref{fig:MadPic}(ii), the path from $x$ to $y$ does pass through $\rho$, then $\alpha = \rho$, (Figure~\ref{fig:MadPic} (ii)) and the corresponding pairwise deviation again becomes
\[r_{xy;\alpha} = \left| \frac{2 d_{y\alpha} }{d_{xy} } - 1 \right| =  \left| \frac{d_{ x \rho } - d_{ y \rho} }{d_{xy} } \right| .\]

Hence, irrespective of whether the path from $x$ to $y$ passes through $\rho$, we define the squared deviation
\[g_{xy}(\rho)  = \left(\frac{d_{x\rho} - d_{y\rho}}{d_{xy}}\right)^2\]
so that if $\alpha_{xy}$ denotes the least common ancestor of $x$ and $y$ then $g_{xy}(\rho) = g_{yx}(\rho) = r^2_{xy;\alpha_{xy}}$.
The overall deviation score for $\rho$ is now obtained by averaging this squared deviation over all pairs of OTUs, to obtain an {\em ancestor deviation score}
 \[r(\rho) = \left[ \frac{2}{n(n-1)} \sum_{x,y} (r_{xy;\alpha_{xy}})^2 \right]^{\frac{1}{2}} =  \left[ \frac{2}{n(n-1)} \sum_{x,y} g_{xy}(\rho) \right]^{\frac{1}{2}}.\]


Minimizing $r(\rho)$ is clearly equivalent to minimizing
\[G(\rho) = \sum_{x,y} g_{xy}(\rho) =  \sum_{x,y}  \left(\frac{d_{x\rho} - d_{y\rho}}{d_{xy}}\right)^2. \]
We show below that $G(\rho)$ is a strictly convex function of the position $\rho$ on the tree, implying that $G(\rho)$ has a unique optimum. 
Our main result is that the optimal position for $\rho$ can be recovered in $O(n^2)$ time on an $n$ OTU tree, with $O(n)$ memory.   This is a significant and practical improvement over the $O(n^3)$ algorithm given by a direct implementation of the MAD formulas. 

\section{An efficient algorithm}

Let $T$ be an unrooted tree with $n$ OTUs. 
To begin with, we assume that $T$ is binary, though this can be relaxed (see below).  
We also assume that all branch lengths are non-negative.
For each pair $x,y$ of  OTUs in $T$ we define the function $f_{xy}(t)$ on the interval $[0,d_{xy}]$ by
\begin{align}
f_{xy}(t) &= \left(\frac{t - (d_{xy} - t)}{d_{xy}}\right)^2 = \left(\frac{2t - d_{xy}}{d_{xy}}\right)^{2} \nonumber \\
&= \frac{4t^{2}}{(d_{xy})^{2}} - \frac{4t}{d_{xy}} + 1, \label{eq:outer}
\end{align}
which is of the form $at^{2}+bt+c$, with the obvious values for $a$, $b$ and $c$.

The following proposition characterises how the function $g_{xy}(\rho)$ changes as $\rho$ moves around the tree, and how that relates to $f_{xy}(t)$.

\begin{prop}
Consider a given pair of leaves $x$, $y$, and some arbitrary location $\rho$, which may or may not be on the $x$-$y$ path.   Let $\alpha$ be the location on the path from $x$ to $y$ that is closest to $\rho$. Then
\begin{equation}
g_{xy}(\rho) =  f_{xy}( d_{x\alpha}). \label{eq:gandf}
\end{equation}
\end{prop}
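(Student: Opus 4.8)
\noindent\emph{Proof plan.} First I would reduce \eqref{eq:gandf} to a single linear identity. Both sides are of the form $(\,\cdot\,/d_{xy})^2$: by definition $g_{xy}(\rho) = \big((d_{x\rho}-d_{y\rho})/d_{xy}\big)^2$, while \eqref{eq:outer} gives $f_{xy}(d_{x\alpha}) = \big((2d_{x\alpha}-d_{xy})/d_{xy}\big)^2$. Since $\alpha$ lies on the $x$--$y$ path, additivity of distances along that path gives $d_{xy} = d_{x\alpha} + d_{\alpha y}$, so $2d_{x\alpha} - d_{xy} = d_{x\alpha} - d_{y\alpha}$. Hence it is enough to prove
\[d_{x\rho} - d_{y\rho} = d_{x\alpha} - d_{y\alpha},\]
and then divide by $d_{xy}$ and square. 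This also makes transparent the cancellation alluded to before the proposition: the $\alpha$--$\rho$ segment contributes equally to $d_{x\rho}$ and $d_{y\rho}$.

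The geometric core is to show that $\alpha$ lies on both the $x$--$\rho$ path and the $y$--$\rho$ path. I would work in the realization of $T$ as a metric space, so that $\rho$ and $\alpha$ are allowed to be interior points of branches, and introduce the \emph{median} $m$ of $x$, $y$, $\rho$ --- the unique point lying simultaneously on the $x$--$y$, $x$--$\rho$ and $y$--$\rho$ geodesics. Its existence and uniqueness is the standard tripod property of trees, a consequence of there being a unique simple path between any two points. Next I would identify $\alpha = m$: for a point $p$ on the $x$--$y$ path lying (say) between $x$ and $m$, one has $p$ on the $x$--$\rho$ geodesic as well, so $d_{\rho p} = d_{x\rho} - d_{xp} \ge d_{x\rho} - d_{xm} = d_{\rho m}$, with equality precisely when $p=m$; the symmetric argument covers $p$ between $m$ and $y$. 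Thus $m$ is the unique closest point of the $x$--$y$ path to $\rho$, which is the definition of $\alpha$.

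With $\alpha = m$ in hand, $\alpha$ lies on the $x$--$\rho$ geodesic, so $d_{x\rho} = d_{x\alpha} + d_{\alpha\rho}$, and similarly $d_{y\rho} = d_{y\alpha} + d_{\alpha\rho}$; subtracting gives the displayed identity and hence \eqref{eq:gandf}. The boundary cases need no separate treatment: if $\rho$ lies on the $x$--$y$ path then $m = \rho = \alpha$ (this is Case~2 of the Introduction, and $\alpha$ then really is the least common ancestor $\alpha_{xy}$ with respect to $\rho$), and if $\rho$ coincides with $x$, $y$, or $\alpha$ then one or more of the relevant distances is zero and every step still goes through.

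The step I expect to be the main obstacle --- more a matter of care than of difficulty --- is the tree-geometry lemma that the closest point $\alpha$ equals the median $m$ and therefore lies on the $x$--$\rho$ and $y$--$\rho$ paths. It is visually clear from Figure~\ref{fig:MadPic}, but a rigorous argument should be phrased using unique paths in the metric tree rather than combinatorially on vertices, exactly because $\rho$ and $\alpha$ may be interior points of edges. Once that lemma is established, the rest is bookkeeping with additive distances.
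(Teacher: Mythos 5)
Your proof is correct and follows essentially the same route as the paper: both reduce \eqref{eq:gandf} to the identity $d_{x\rho}-d_{y\rho}=d_{x\alpha}-d_{y\alpha}$ and then substitute $t=d_{x\alpha}$, $d_{\alpha y}=d_{xy}-t$ into \eqref{eq:outer}. The only difference is that you rigorously justify that identity via the median/tripod property of the metric tree (showing the closest point $\alpha$ lies on both the $x$--$\rho$ and $y$--$\rho$ geodesics, so the $\alpha$--$\rho$ segment cancels), whereas the paper's proof asserts the step $d_{x\rho}-d_{y\rho}=d_{x\alpha}-d_{\alpha y}$ directly and relegates this cancellation to an informal closing remark.
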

\begin{proof}
Let $t = d_{x\alpha}$. From the definitions of $g_{xy}(\rho)$ and $f_{xy}(t)$ we have
\begin{align}
g_{xy}(\rho) & = \left(\frac{d_{x\rho} - d_{y\rho}}{d_{xy}}\right)^2 \nonumber \\
& =  \left(\frac{d_{x \alpha} - d_{\alpha y}}{d_{xy}}\right)^2 \nonumber\\
& = \left(\frac{t - (d_{x y}-t)}{d_{xy}}\right)^2 \nonumber \\
& = f_{xy}(t). 
\end{align}
Intuitively, that part of the distance between $x$ (respectively $y$) and $\rho$ that does not lie on the $x$-$y$ path is either zero as $\rho$ lies on the path, or cancels out in the expression above.
\end{proof}

We make direct use of \eqref{eq:gandf} later. However first we demonstrate an important property of $G(\rho)$, and hence of $r(\rho)$. 

\begin{prop}
The function $G(\rho)$ is strictly convex on $T$. Hence there is a unique point $\rho$ minimizing $G(\rho)$ and any local optimum is a global optimum.
\end{prop}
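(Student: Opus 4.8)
The plan is to establish two local facts about $G$ and then glue them, taking ``strictly convex on $T$'' to mean that the restriction of $G$ to each path in $T$, read as a function of arc length, is strictly convex. Fact~A: on the interior of every edge of $T$, $G$ agrees with a strictly convex quadratic. Fact~B: $G$ has no concave kink where a path passes through a vertex. Granting these, the restriction of $G$ to any path is convex and is strictly convex on the interior of each edge the path meets; hence it is not affine on any subinterval, and a convex function that is not affine on any subinterval lies strictly below each of its chords. So $G$ is strictly convex on $T$; in particular the continuous function $G$ on the compact tree $T$ attains its minimum at a single point, and every local minimum is global.

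\emph{Fact A.} Fix an edge $e=uv$ and a point $\rho$ in its interior at distance $t\in(0,d_{uv})$ from $u$. The path from any leaf $x$ to $\rho$ enters $e$ through exactly one of $u,v$, so $d_{x\rho}=d_{xu}+t$ or $d_{x\rho}=d_{xv}+(d_{uv}-t)$; either way $d_{x\rho}$ is affine in $t$ with slope $\pm1$. Hence $d_{x\rho}-d_{y\rho}$ is affine in $t$, so $g_{xy}(\rho)=\big((d_{x\rho}-d_{y\rho})/d_{xy}\big)^{2}$ is a quadratic in $t$ (equivalently, by \eqref{eq:gandf} and \eqref{eq:outer}, $g_{xy}(\rho)=f_{xy}(d_{x\alpha})$ with $d_{x\alpha}=\tfrac12(d_{xy}+d_{x\rho}-d_{y\rho})$ affine in $t$). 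Its leading coefficient in $t$ is $0$ if $x$ and $y$ lie on the same side of $e$ and equals $4/d_{xy}^{2}>0$ if they lie on opposite sides. Summing over pairs, $G|_e$ is a quadratic in $t$ whose leading coefficient is $\sum 4/d_{xy}^{2}$ over the pairs separated by $e$; since every edge of a tree splits the leaves into two non-empty parts (for a pendant edge, one part is a single leaf), at least one such pair exists and $G|_e$ is strictly convex.

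\emph{Fact B.} Fix a vertex $w$ and a path through $w$ that arrives along an edge $e_1$ and leaves along an edge $e_2$; parametrize it by signed arc length $t$, with $w$ at $t=0$, $e_1$ in $t<0$ and $e_2$ in $t>0$. Partition the leaves into $L_1$ (leaves in the component of $T\setminus\{w\}$ reached through $e_1$), $L_2$ (reached through $e_2$) and $L_0$ (the rest). For $\rho$ near $w$ on the $e_i$ side one has $d_{x\rho}=d_{xw}+|t|$ if $x\notin L_i$ and $d_{x\rho}=d_{xw}-|t|$ if $x\in L_i$; substituting into $g_{xy}$ and differentiating, the jump of $\tfrac{d}{dt}g_{xy}(\rho(t))$ at $t=0$ (right minus left derivative) becomes a short expression in $d_{xw}-d_{yw}$ and in which of $L_0,L_1,L_2$ contain $x$ and $y$. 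Pairs with both leaves in one class contribute $0$, and so do pairs with one leaf in $L_1$ and one in $L_2$; the remaining contributions come from pairs split between $L_0$ and $L_1\cup L_2$, and one would sum these and check the total jump is $\ge 0$.

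The main obstacle is exactly this last summation. On the interior of an edge every summand of $G$ is a convex quadratic, so strict convexity there is automatic; but at a vertex the per-pair contributions to the derivative jump are \emph{not} individually sign-definite --- a single pair can create a concave kink at $w$ --- so non-negativity of the aggregate cannot be read off term by term and must instead be extracted from how the distances $d_{xw}$ for $x\in L_0$ compare with those for $x\in L_1\cup L_2$. That is the step I would spend the most effort on; if the vertex inequality holds the proposition follows as above, and in any case Fact~A already suffices to locate the optimum, by minimizing the quadratic $G|_e$ on each of the $O(n)$ edges and comparing.
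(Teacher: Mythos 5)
You have isolated exactly the right difficulty, and unfortunately it is not a gap that can be closed: your Fact~B is false, and with it the proposition itself. A concrete counterexample is the star tree with centre $w$ and three leaves, $x_1$ and $x_2$ at distance $1$ from $w$ and $z$ at distance $\epsilon\in(0,1)$. Parametrise the geodesic from $x_1$ to $x_2$ by $t\in[0,2]$, with $w$ at $t=1$. Then $g_{x_1x_2}=(t-1)^2$ is smooth across $w$, while $g_{x_1z}=\bigl(\tfrac{2t-1-\epsilon}{1+\epsilon}\bigr)^2$ for $t\le 1$ and is constant for $t\ge 1$, and $g_{x_2z}$ is its mirror image. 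Hence $G'(1^-)=+\tfrac{4(1-\epsilon)}{(1+\epsilon)^2}>0$ while $G'(1^+)=-\tfrac{4(1-\epsilon)}{(1+\epsilon)^2}<0$: the aggregate derivative jump at the vertex is strictly negative, $G$ has a strict local \emph{maximum} at $w$ along this path, and by symmetry there are two distinct global minimisers, one in the interior of each unit-length edge. So $G$ is not convex on $T$, the minimiser need not be unique, and a local optimum need not be global. Your diagnosis of where the danger lies --- pairs split between $L_0$ and $L_1\cup L_2$ --- is precisely what this example exploits: the contribution of such a pair $\{x,y\}$ ($x\in L_1$, $y\in L_0$) to the jump works out to $4(d_{yw}-d_{xw})/d_{xy}^2$, which is negative exactly when $w$ lies beyond the midpoint of the $x$--$y$ path; here $d_{zw}=\epsilon<1=d_{x_iw}$ for both split pairs, so every nonzero contribution is negative and no cancellation is available.

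For what it is worth, the paper's own proof founders on the same rock, only less visibly: it asserts that each $g_{xy}$ is convex on all of $T$ because it is strictly convex on the $x$--$y$ path and constant on the components hanging off that path. That inference is invalid: a geodesic may run along part of the $x$--$y$ path and exit it at a point beyond the midpoint, where $f_{xy}'>0$; at the exit the derivative of $g_{xy}$ drops to zero, a concave kink (this is exactly what happens to $g_{x_1z}$ at $w$ above). What does survive is your Fact~A: on each closed edge $G$ restricts to a strictly convex quadratic, since every edge separates at least one pair of leaves. That is enough to make the algorithm correct in a weaker sense --- minimise the quadratic $f_{uv}$ over each of the $O(n)$ edges and return the best by value --- but the first-order test in the paper's later optimality proposition must then be read as enumerating \emph{candidate} optima to be compared, not as certifying a unique one. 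Your closing sentence already proposes exactly this repair, and it is the right one.
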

\begin{proof}
For each $x,y$ the function $g_{xy}(\rho)$ coincides with with the strictly convex function $f_{xy}(d_{x\rho})$ on the path from $x$ to $y$. If we remove all branches on this path from $T$ then $g_{xy}(\rho)$ is constant on each of the components remaining. Hence $g_{xy}(\rho)$ is convex on $T$ and strictly convex on the path from $x$ to $y$. 

As $G(\rho) = \sum_{x,y} g_{xy}(\rho)$ is the sum of convex functions, it is itself convex. And as each pair of locations in $T$ is on the path connecting at least one pair of OTUs, $G$ is strictly convex.
\end{proof}

The next step is to define the functions $f_{uv}$ along each branch $uv$.
Let $uv$ be a branch of $T$.
Removing $uv$ partitions the set of OTUs into two parts: let $U$ be the set of OTUs closest to $u$ and let $V$ the be set of OTUs closest to $v$.  
We define the function $f_{uv}(t)$ on the interval $[0,d_{uv}]$ by
\begin{align}
f_{uv}(t) &= \sum_{x \in U} \sum_{y \in V} f_{xy}(t + d_{ux}).	
\end{align}
By \eqref{eq:gandf} we have that if $\rho$ is the point on the path from $u$ to $v$ that is distance $t$ from $u$ then 
\[f_{uv}(t) = \sum_{x \in U} \sum_{y \in V} g_{xy}(\rho).\]

The function $f_{uv}(t)$ is the sum of quadratic functions of $t$, so is itself a quadratic function of $t$. 
We let $a_{uv},b_{uv},c_{uv}$ denote the coefficients of the quadratic for each $u$, $v$, so that
\[f_{uv}(t) = a_{uv}t^2 + b_{uv}t + c_{uv}.\]
Note that $f_{uv}$ and $f_{vu}$ are not the same function.
%
%
%

We will see below that once we have computed the coefficients $a_{uv}$ and $b_{uv}$ for each branch we can quickly determine the location $\rho$ which minimizes $G(\rho)$. Here we show how to compute these coefficients for all branches in $O(n^2)$ time. First note
\begin{align*}
f_{uv}(t) &= \sum_{x \in U} \sum_{y \in V} f_{xy}(t + d_{ux}) \\
& = \sum_{x \in U} \sum_{y \in V} a_{xy} t^2  + (b_{xy} + 2 a_{xy} d_{ux})t  + constant
\end{align*}
so that, by \eqref{eq:outer},
\begin{align}
a_{uv} & =  \sum_{x \in U} \sum_{y \in V} a_{xy} \nonumber \\
& =  \sum_{x \in U} \sum_{y \in V} \frac{4}{(d_{xy})^2} \label{eq:aexpand} \\
b_{uv} & =  \sum_{x \in U} \sum_{y \in V} (b_{xy} + 2a_{xy} d_{ux}) \nonumber \\
&=   \sum_{x \in U} \sum_{y \in V} \left( \frac{-4}{d_{xy}} + \frac{8d_{ux}}{(d_{xy})^2} \right) \label{eq:bexpand}.
\end{align}

There are two stages in the algorithm. In the first stage we compute $a_{uv}$ and $b_{uv}$ for each branch $uv$ such that $u$ is an OTU. 
 For the second stage, we compute $a_{uv}$ and $b_{uv}$ for all other branches in the tree. To do this we temporarily root the tree at an arbitrary OTU (the choice of OTU does not affect the final result). We then make use of the following recursion.

\begin{figure}
\centerline{\includegraphics[width=0.2\textwidth]{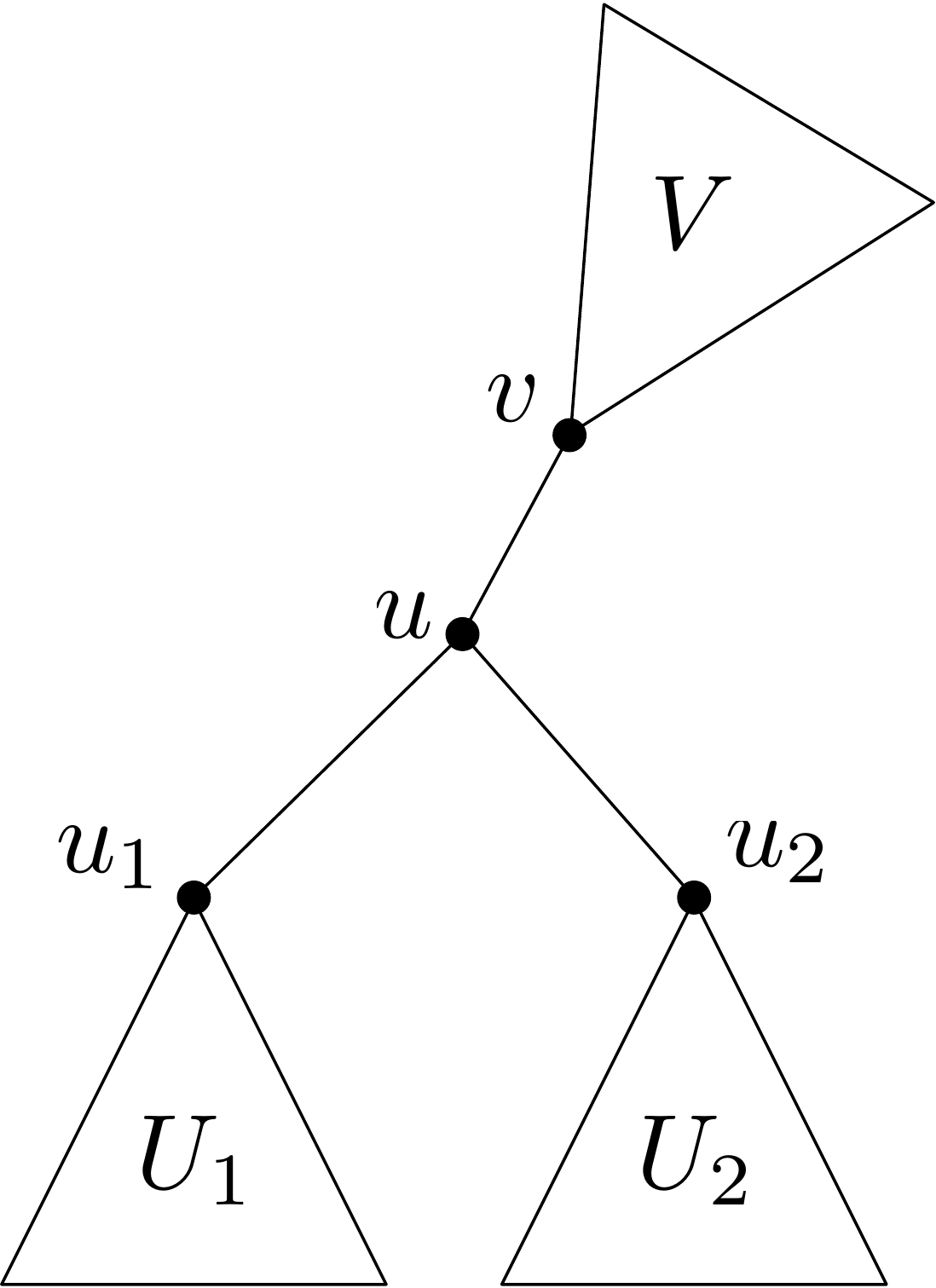}}
\caption{\small Sets of OTUs defined by the branch $\{u,v\}$. Let $U_1$ and $U_2$ denote the  sets of OTUs for the subtrees rooted at $u_1$ and $u_2$, and let $V$ be the set of remaining OTUs. \label{fig:tripleTree} }
\end{figure}

\begin{prop} \label{prop:abrecurse}
Let $u$ be an internal node in the tree, let $v$ be its parent and let $u_1,u_2$ be its children. 
Let $U_1,U_2$ be the OTUs for the subtrees rooted at $u_1$ and $u_2$ and let $V$ be the set of remaining OTUs, as illustrated in Figure~\ref{fig:tripleTree}. 
Then
\begin{align}
	a_{uv} & = a_{u_1 u} + a_{u_2 u}   - \sum_{x \in U_1} \sum_{y \in U_2} \frac{8}{(d_{xy})^2} \label{eq:arec} \\
	b_{uv} & =  b_{u_1u}  +  b_{u_2u}  + 2 d_{u_1u} a_{u_1 u}  + 2 d_{u_2u} a_{u_2 u}. \label{eq:brec}
\end{align}
\end{prop}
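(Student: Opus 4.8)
The plan is to avoid manipulating the coefficients directly and instead work with the quadratic functions $f_{uv}(t)$ themselves, reading off \eqref{eq:arec} and \eqref{eq:brec} at the end by comparing coefficients of $t^2$ and $t$. The key structural observation concerns the partitions involved: the branch $u_1u$ separates the OTUs into $U_1$ (closest to $u_1$) and $U_2\cup V$ (closest to $u$), and the branch $u_2u$ separates them into $U_2$ and $U_1\cup V$, so by the definition of the branch functions
\[
f_{u_1u}(s) = \sum_{x\in U_1}\sum_{y\in U_2\cup V} f_{xy}(s+d_{u_1x}),
\qquad
f_{u_2u}(s) = \sum_{x\in U_2}\sum_{y\in U_1\cup V} f_{xy}(s+d_{u_2x}).
\]
Since for $x\in U_1$ the $x$-to-$u$ path runs through $u_1$ we have $d_{ux}=d_{u_1u}+d_{u_1x}$, and symmetrically $d_{ux}=d_{u_2u}+d_{u_2x}$ for $x\in U_2$. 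Consequently, extending $f_{u_1u}$ beyond $[0,d_{u_1u}]$ as the quadratic polynomial it is, the shifted function $t\mapsto f_{u_1u}(t+d_{u_1u})$ equals $\sum_{x\in U_1}\sum_{y\in U_2\cup V} f_{xy}(t+d_{ux})$, and likewise $f_{u_2u}(t+d_{u_2u}) = \sum_{x\in U_2}\sum_{y\in U_1\cup V} f_{xy}(t+d_{ux})$.

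I would then add these two shifted functions. Splitting $U_2\cup V$ and $U_1\cup V$ into their $V$-parts and their $U_2$- (resp.\ $U_1$-) parts, the two $V$-parts recombine to give exactly $f_{uv}(t)=\sum_{x\in U_1\cup U_2}\sum_{y\in V} f_{xy}(t+d_{ux})$, while the two remaining pieces both run over $U_1\times U_2$. Relabelling $x\leftrightarrow y$ in one of them (legitimate since $f_{xy}$ depends only on $d_{xy}=d_{yx}$ by \eqref{eq:outer}) turns the sum into
\[
f_{u_1u}(t+d_{u_1u}) + f_{u_2u}(t+d_{u_2u})
= f_{uv}(t) + \sum_{x\in U_1}\sum_{y\in U_2}\bigl( f_{xy}(t+d_{ux}) + f_{xy}(t+d_{uy}) \bigr).
\]
The correction term is deliberately isolated so that the one genuine computation becomes visible: for $x\in U_1,\ y\in U_2$ the $x$-$y$ path passes through $u$, hence $d_{ux}+d_{uy}=d_{xy}$, and substituting this into \eqref{eq:outer} and adding, the linear-in-$t$ contributions cancel, leaving $f_{xy}(t+d_{ux})+f_{xy}(t+d_{uy}) = \tfrac{8t^2}{(d_{xy})^2} + (\text{constant in }t)$.

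It then remains only to compare coefficients. On the left, the substitution $s\mapsto t+d_{u_iu}$ sends $a_{u_iu}s^2+b_{u_iu}s+c_{u_iu}$ to a quadratic in $t$ with leading coefficient $a_{u_iu}$ and linear coefficient $b_{u_iu}+2a_{u_iu}d_{u_iu}$. Matching the coefficients of $t^2$ gives $a_{u_1u}+a_{u_2u} = a_{uv} + \sum_{x\in U_1}\sum_{y\in U_2}\tfrac{8}{(d_{xy})^2}$, which is \eqref{eq:arec}; matching the coefficients of $t$ gives $(b_{u_1u}+2a_{u_1u}d_{u_1u})+(b_{u_2u}+2a_{u_2u}d_{u_2u}) = b_{uv}$, which rearranges to \eqref{eq:brec}. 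The main obstacle here is purely organisational: keeping the three partitions $(U_1,U_2\cup V)$, $(U_2,U_1\cup V)$ and $(U_1\cup U_2,V)$ straight, and checking that each $U_1\times U_2$ pair is picked up exactly twice on the left while contributing nothing to the coefficient of $t$. A more computational alternative would be to substitute the closed forms \eqref{eq:aexpand}--\eqref{eq:bexpand} for all six quantities and verify the two identities term by term, using $d_{ux}=d_{u_1u}+d_{u_1x}$ and, for $x\in U_1,\ y\in U_2$, $d_{xy}=d_{u_1x}+d_{u_1u}+d_{u_2u}+d_{u_2y}$; this sidesteps the shifted-quadratic bookkeeping at the cost of more algebra.
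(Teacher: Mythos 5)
Your proof is correct, and it takes a genuinely different (and arguably cleaner) route than the paper's. The paper works entirely at the level of the explicit double sums \eqref{eq:aexpand}--\eqref{eq:bexpand}: it reads \eqref{eq:arec} off from the double-counting of $U_1\times U_2$ pairs, and then verifies \eqref{eq:brec} by a long term-by-term expansion of the right-hand side, using $d_{xy}=d_{xu_1}+d_{u_1u}+d_{uu_2}+d_{u_2y}$ to reassemble $b_{uv}$. You instead package everything into the single polynomial identity
\[
f_{u_1u}(t+d_{u_1u}) + f_{u_2u}(t+d_{u_2u})
= f_{uv}(t) + \sum_{x\in U_1}\sum_{y\in U_2}\bigl( f_{xy}(t+d_{ux}) + f_{xy}(t+d_{uy}) \bigr),
\]
obtained from the same underlying decomposition of index sets, and then read off both recursions at once by comparing coefficients of $t^2$ and $t$. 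The steps you rely on all check out: the two branch functions do correspond to the partitions $(U_1,\,U_2\cup V)$ and $(U_2,\,U_1\cup V)$; the shift by $d_{u_iu}$ correctly converts $d_{u_ix}$ into $d_{ux}$ (and extending the quadratics beyond their nominal domains is harmless since these are polynomial identities); and the computation that $f_{xy}(t+d_{ux})+f_{xy}(t+d_{uy})=\tfrac{8t^2}{(d_{xy})^2}+\text{const}$ via $d_{ux}+d_{uy}=d_{xy}$ is exactly right. What your approach buys is an explanation the paper's verification obscures: the absence of a $U_1\times U_2$ correction term in \eqref{eq:brec} is seen to come from the cancellation of the linear parts of the doubly-counted pairs, rather than emerging at the end of a page of algebra. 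The cost is the bookkeeping of three partitions and the shifted-quadratic substitutions, which you handle correctly.
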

\begin{proof}
\begin{align*}
	a_{uv} & = \sum_{x \in U_1} \sum_{z \in V} \frac{4}{(d_{xz})^2} +  \sum_{y \in U_2} \sum_{z \in V} \frac{4}{(d_{yz})^2}  \\
	& = a_{u_1u} + a_{u_2 u} - 2 \sum_{x \in U_1} \sum_{y \in U_2} \frac{4}{(d_{xy})^2}
	\intertext{ and }
	b_{u_1u}  +  b_{u_2u}  + 2 d_{u_1u} a_{u_1 u}  + 2 d_{u_2u} a_{u_2 u} \hspace{-4cm} & \\
	& = \sum_{x \in U_1} \sum_{y \in U_2}  \left(\frac{-4}{d_{xy}} + \frac{8d_{u_1x}}{(d_{xy})^2} \right) + \sum_{x \in U_1} \sum_{z \in V}  \left(\frac{-4}{d_{xz}} + \frac{8d_{u_1x}}{(d_{xz})^2} \right)  \\
	& \quad + \sum_{y \in U_2} \sum_{x \in U_1}  \left(\frac{-4}{d_{xy}} + \frac{8d_{u_2y}}{(d_{xy})^2} \right) + \sum_{y \in U_2} \sum_{z \in V}  \left(\frac{-4}{d_{yz}} + \frac{8d_{u_2y}}{(d_{yz})^2} \right)  \\
	& \quad + 2d_{u_1 u} \left( \sum_{x \in U_1} \sum_{y \in U_2}  \frac{4}{(d_{xy})^2}  +  \sum_{x \in U_1} \sum_{z \in V}  \frac{4}{(d_{xz})^2} \right) \\
	& \quad +  2d_{u_2 u} \left( \sum_{y \in U_2} \sum_{x \in U_1}  \frac{4}{(d_{xy})^2}  +  \sum_{y \in U_2} \sum_{z \in V}  \frac{4}{(d_{yz})^2} \right) \\
	& = \sum_{x \in U_1} \sum_{y \in U_2} \left( \frac{-4}{d_{xy}} + \frac{8d_{u_1x}}{(d_{xy})^2}  + \frac{-4}{d_{xy}} + \frac{8d_{u_2y}}{(d_{xy})^2} +  \frac{8d_{u_1 u} }{(d_{xy})^2}  +  \frac{8d_{u_2 u} }{(d_{xy})^2}  \right) \\
	& \quad + \sum_{x \in U_1} \sum_{z \in V}  \left(\frac{-4}{d_{xz}} + \frac{8d_{u_1x}}{(d_{xz})^2} +  \frac{8d_{u_1u}}{(d_{xz})^2} \right) \\
	& \quad + \sum_{y \in U_2} \sum_{z \in V}  \left(\frac{-4}{d_{yz}} + \frac{8d_{u_2y}}{(d_{yz})^2} +  \frac{8d_{u_2u}}{(d_{yz})^2} \right) \\
	& = \sum_{x \in U_1} \sum_{z \in V} \left( \frac{-8}{d_{xy}} + \frac{8(d_{xu_1} + d_{u_1u} + d_{uu_2} + d_{u_2y})}{(d_{xy})^2}\right) \\
	& \quad +  \sum_{x \in U_1} \sum_{z \in V}  \left(\frac{-4}{d_{xz}} + \frac{8d_{xu}}{(d_{xz})^2} \right) 
	+ \sum_{y \in U_2} \sum_{z \in V}  \left(\frac{-4}{d_{yz}} + \frac{8d_{yu}}{(d_{yz})^2}  \right) \\
	& = b_{uv}.
\end{align*}
\end{proof}

We note that the algorithm assumes that the tree $T$ is {\em binary}. To handle non-binary (multifurcating) trees we temporarily insert branches with zero length, in order to make them binary. This does not affect the values of $a_{uv}$ and $b_{uv}$ for the remaining branches. After the algorithm has completed, we remove the additional branches.

We now show how to quickly determine the location $\rho$ minimizing $G(\rho)$, using the coefficients $a_{uv}$ and $b_{uv}$ for each branch $uv$. 

\begin{prop} \label{prop:opt} 
\begin{enumerate}
	\item Let $uv$ be a branch of $T$. 
	Let $t = -\frac{b_{uv}}{2a_{uv}}$. 
	If $0 <t < d_{uv}$ then the location $\rho$ at distance $t$ along the branch from $u$ to $v$ is optimal. 
	\item Let $uv$ be a branch of $T$ such that $u$ is internal. 
	Let $u_1,u_2,\ldots,u_d$ be the nodes adjacent to $u$ other than $v$ (not assuming that $T$ is binary). 
	If $\frac{-b_{u_k u}}{2a_{u_ku}} \geq d_{u_k u}$  for all $k=1,2,\ldots,d$, and  $\frac{-b_{uv}}{2a_{uv}} \leq 0$, then the location $\rho = u$ is optimal.
	\item There is exactly one location in the tree which satisfies the first or second condition.
\end{enumerate}
\end{prop}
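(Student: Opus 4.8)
The plan is to exploit strict convexity of $G(\rho)$ on the tree $T$, established in the earlier proposition, together with the observation that on each branch $G$ restricts to the quadratic $f_{uv}(t) = a_{uv}t^2 + b_{uv}t + c_{uv}$. Since every $f_{xy}$ has $a_{xy} = 4/(d_{xy})^2 > 0$, each $a_{uv}$ is a sum of positive terms, so $a_{uv} > 0$ and $f_{uv}$ is a genuine upward parabola with unique unconstrained minimizer at $t = -b_{uv}/(2a_{uv})$. I would first record the elementary fact that the minimum of $f_{uv}$ over the closed interval $[0,d_{uv}]$ is attained at the clamp of $-b_{uv}/(2a_{uv})$ to that interval, and note the compatibility relations at shared nodes: $f_{uv}(0) = f_{vu}(d_{vu})$ (both equal $G(u)$ up to the constant bookkeeping), and more usefully, the one-sided derivative of $G$ at a node $u$ in the direction of a neighbor $w$ is $b_{uw} = f'_{uw}(0)$, while in the direction of $v$ (back along $vu$) it is $-f'_{uv}(d_{uv}) = -(2a_{uv}d_{uv} + b_{uv})$.

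Existence of a minimizer follows because $T$ is a compact metric space (a finite union of closed intervals) and $G$ is continuous; strict convexity gives uniqueness of the global minimizer $\rho^\star$. So I would prove statement 3 by showing that (a) $\rho^\star$ satisfies condition 1 or condition 2, and (b) any $\rho$ satisfying condition 1 or 2 equals $\rho^\star$ — that is, conditions 1 and 2 are exactly the first-order optimality conditions for a convex function on a tree. For (b), if $\rho$ is interior to a branch $uv$ and $-b_{uv}/(2a_{uv}) \in (0,d_{uv})$, then $\rho$ is the unconstrained minimizer of the strictly convex $f_{uv}$, hence a local minimum of $G$, hence (by convexity) the global minimum; this handles condition 1. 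For condition 2, if $\rho = u$ with $u$ internal, the hypothesis $-b_{u_ku}/(2a_{u_ku}) \ge d_{u_k u}$ says the parabola $f_{u_k u}$ is still decreasing at $t = d_{u_k u}$, i.e. $f'_{u_k u}(d_{u_k u}) \le 0$, so moving from $u$ toward $u_k$ does not decrease $G$ (the directional derivative $-f'_{u_k u}(d_{u_k u}) \ge 0$); and $-b_{uv}/(2a_{uv}) \le 0$ says $f'_{uv}(0) = b_{uv} \ge 0$, so moving from $u$ toward $v$ also does not decrease $G$. Since every direction out of $u$ is nondecreasing and $G$ is convex, $u$ is the global minimum.

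For (a): let $\rho^\star$ be the global minimizer. If $\rho^\star$ lies in the interior of some branch $uv$, then by local optimality $f'_{uv}(d_{u\rho^\star}) = 0$, so $d_{u\rho^\star} = -b_{uv}/(2a_{uv}) \in (0,d_{uv})$ and condition 1 holds. If $\rho^\star$ is a node $u$: if $u$ is a leaf this is a degenerate/boundary case I would either handle by the same clamping argument on the unique incident branch (and note the problem statement's conditions are phrased for internal $u$, with a symmetric leaf variant), and if $u$ is internal, then first-order optimality in every incident direction forces, for the branch $uv$ toward the temporary root, $f'_{uv}(0) = b_{uv} \ge 0$ i.e. $-b_{uv}/(2a_{uv}) \le 0$, and for each child direction $u_k$, $-f'_{u_k u}(d_{u_k u}) \ge 0$ i.e. $-b_{u_k u}/(2a_{u_k u}) \ge d_{u_k u}$, which is exactly condition 2. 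Uniqueness of the satisfying location in statement 3 is then immediate from uniqueness of $\rho^\star$ plus the fact that conditions 1 and 2 are mutually exclusive with each other and across branches (an interior-of-branch solution and a node solution cannot coexist, and two distinct interior solutions would give two distinct local minima).

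The main obstacle I anticipate is bookkeeping the directional-derivative signs correctly and uniformly — in particular getting the orientation right for the "back" direction along an oriented branch $vu$ versus the "forward" direction, and making sure the boundary cases (leaf nodes, and the degenerate case $a_{uv} = 0$, which happens only in the trivial $n \le 1$ situation) are either excluded by hypothesis or dispatched cleanly. Everything else is the standard characterization of minima of a convex function on a polyhedral-like domain, specialized to a tree, and should go through without real difficulty once the sign conventions for $b_{uv}$ are pinned down.
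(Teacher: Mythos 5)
Your overall strategy --- strict convexity of $G$, identifying $G$ restricted to a branch with the quadratic $f_{uv}$ up to an additive constant, and translating the two conditions into first-order optimality conditions (interior stationary point of the upward parabola, or all one-sided directional derivatives at a node nonnegative) --- is exactly the paper's argument, and your treatment of parts 1 and 2 is correct; the sign bookkeeping you worried about ($f'_{uw}(0)=b_{uw}$ outward from $u$ along $uw$, and $-f'_{u_k u}(d_{u_k u})$ for the direction from $u$ back toward $u_k$) comes out right in the places where it matters.

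The one genuine gap is in part 3, precisely in the case you flag and then leave open: the possibility that the global minimizer $\rho^\star$ sits at a \emph{leaf}. If that could happen, no location in the tree would satisfy condition 1 or condition 2 (condition 2 explicitly requires $u$ internal), so part 3 would be false; you must therefore \emph{exclude} this case, not accommodate it with a ``symmetric leaf variant'' of the conditions as you suggest. The exclusion is one line, and it is the one substantive observation in the paper's proof that your proposal is missing: when $u$ is an OTU, equation \eqref{eq:bexpand} with $U=\{u\}$ and $d_{uu}=0$ gives
\[ b_{uv} \;=\; \sum_{y \neq u} \frac{-4}{d_{uy}} \;<\; 0, \]
so $\frac{d}{dt}G(\rho_t)=2a_{uv}t+b_{uv}$ is strictly negative at $t=0$ and $G$ strictly decreases as $\rho$ moves off the leaf; hence $\rho^\star$ is never a leaf. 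With that in place, your dichotomy (existence and uniqueness of $\rho^\star$ by compactness and strict convexity, then ``a location satisfies condition 1 or 2 if and only if it equals $\rho^\star$'') does close part 3.
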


\begin{proof}
Let $uv$ be a branch in the tree, let $U$ be the set of OTUs closer to $u$ than $v$ and let $V$ be the complement of $U$. Let $\rho_t$ denote the location which is distance $t$ along the branch from $u$ to $v$, $0 <t<d_{uv}$. 
\begin{align*}
	\frac{d}{dt} G(\rho_t) & = \sum_{xy} \frac{d}{dt} g_{xy}(\rho_t) \\ 
	& = \sum_{x \in U} \sum_{y \in V} \frac{d}{dt} g_{xy}(\rho_t)  \\
	& = \sum_{x \in U} \sum_{y \in V} \frac{d}{dt}  f_{xy}(d_{xu} + t) & \mbox{ by  \eqref{eq:gandf} }\\
	& =  \frac{d}{dt}  f_{uv}(t) \\
	& = 2 a_{uv} t + b_{uv}.
\end{align*}
If $u$ is an OTU then $G(\rho_t)$ is strictly decreasing at $t=0$. Hence the optimal location for $G(\rho)$ is along a branch $uv$ or at an internal node $u$. The first case is characterized by a stationary point a solution of $\frac{d}{dt} f_{uv}(t) = 0$ for $0<t<d_{uv}$. The second case is characterized by $\frac{d}{dt} f_{u_1 u}(d_{u_1u}) \leq 0$, $\frac{d}{dt} f_{u_2 u}(d_{u_2u}) \leq 0$ and $\frac{d}{dt} f_{uv}(0) \geq 0$.
\end{proof}

These optimality conditions can be checked in constant time per edge. 

\begin{thm}
The MAD root for a tree with $n$ leaves can be determined in $O(n^2)$ time with $O(n)$ memory.
\end{thm}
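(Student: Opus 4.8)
The plan is to bound running time and memory by accounting separately for the two stages of the algorithm together with a final scan, and then to dispose of the non-binary case. First root $T$ at an arbitrary OTU and build, in $O(n)$ time and $O(n)$ space, a structure answering lowest-common-ancestor queries in $O(1)$. The single genuinely expensive part is the computation of the pendant-edge coefficients and of the correction terms $C_u := \sum_{x \in U_1}\sum_{y\in U_2} 8/(d_{xy})^2$ appearing in \eqref{eq:arec}; both are driven by the leaf-to-leaf distances, and I would compute them in one sweep. For each OTU $x$ in turn, run a breadth-first traversal of $T$ from $x$ to obtain $d_{xy}$ for every leaf $y$ --- $O(n)$ time and $O(n)$ working memory, reused across the $n$ choices of $x$. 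During the sweep from $x$, (i) accumulate $\sum_{y} 4/(d_{xy})^2$ and $\sum_y -4/d_{xy}$, which by \eqref{eq:aexpand} and \eqref{eq:bexpand} (using $d_{xx}=0$) are exactly the coefficients $a_{xv},b_{xv}$ of the pendant edge at $x$; and (ii) for each other leaf $y$, add $4/(d_{xy})^2$ to an accumulator indexed by $\mathrm{lca}(x,y)$. Each ordered pair of distinct leaves is handled exactly once, and a pair contributes only to the accumulator at its lowest common ancestor (where its two endpoints lie in different child subtrees), so after all leaves are processed the accumulator at an internal node $u$ equals $C_u$. The total work is $O(n^2)$; since $\sum_u |U_1||U_2| = \binom n 2$, the cross-sums are subsumed in this bound, and the memory is $O(n)$.

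For the second stage, perform a single post-order traversal of the rooted tree. When an internal node $u$ with parent $v$ and children $u_1,u_2$ is reached, the children's toward-parent coefficients are already available (for a leaf child, from the first sweep), and $C_u$ is precomputed, so \eqref{eq:arec} and \eqref{eq:brec} yield $a_{uv},b_{uv}$ in constant time; this is $O(n)$ time and $O(n)$ memory, one coefficient pair per edge produced. The opposite orientation of an edge is obtained in $O(1)$ from $a_{vu}=a_{uv}$ and $b_{vu}=-2a_{uv}d_{uv}-b_{uv}$, which follow from $f_{vu}(s)=f_{uv}(d_{uv}-s)$. Finally, scan every edge and every internal node and test the conditions of Proposition~\ref{prop:opt}: for an edge, whether $-b_{uv}/(2a_{uv})$ lies strictly inside $(0,d_{uv})$; at a node $u$, the one-sided inequalities against the coefficients of its incident edges (using the reverse-orientation formula where needed). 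Each test is $O(1)$, so the scan is $O(n)$, and by part~3 of Proposition~\ref{prop:opt} exactly one test succeeds and pinpoints the MAD root. For a non-binary tree, first subdivide the high-degree nodes with $O(n)$ zero-length edges, which by the remark after Proposition~\ref{prop:abrecurse} leaves the coefficients of the original edges unchanged, run the procedure, and contract the inserted edges; this preserves the $O(n^2)$ time and $O(n)$ memory bounds, completing the proof.

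I expect the main obstacle to be holding the $\Theta(n^2)$-sized amount of distance information within $O(n)$ memory: both the pendant-edge coefficients and the terms $C_u$ reference all $\binom n 2$ pairwise distances, so a direct implementation stores an $n\times n$ matrix. The plan circumvents this by streaming --- each source leaf's row of distances is built, consumed, and discarded before the next --- and by charging each pair of leaves to the single cross-sum accumulator at their lowest common ancestor, so that $O(n)$ accumulators suffice while the total time stays $O(n^2)$. A secondary point requiring care is keeping the orientation conventions consistent: the first stage produces the leaf-outward orientation of each pendant edge and the second stage the toward-parent orientation of the remaining edges, so Proposition~\ref{prop:opt} must be invoked with the reversal identity whenever a branch is examined from its other end.
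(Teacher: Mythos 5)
Your proof is correct and follows essentially the same route as the paper's: per-leaf distance sweeps for the pendant-edge coefficients, the recursion of Proposition~\ref{prop:abrecurse} for internal edges with the $\sum_u |U_1||U_2| = \binom{n}{2}$ counting argument, and an $O(n)$ scan of the optimality conditions of Proposition~\ref{prop:opt}. The only difference is organizational --- you precompute the cross-correction terms $C_u$ during the leaf sweeps by bucketing each pair's contribution at its lowest common ancestor, whereas the paper evaluates these sums during the post-order traversal by recomputing within-subtree distances at each node; both yield the same $O(n^2)$ time and $O(n)$ memory bounds.
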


\begin{proof}
There are three stages to the algorithm. In the first stage we compute the coefficients $a_{uv}$ and $b_{uv}$ for all  branches  connected to OTUs. This takes $O(n)$ time per external branch since we can compute the distance $d_{uy}$ from an OTU $u$ to every other OTU $y$ in linear time and then substitute these distances directly into \eqref{eq:aexpand} and \eqref{eq:bexpand}. Hence the first stage takes $O(n^2)$ time and $O(n)$ memory.

In the second stage we compute $a_{uv}$ and $b_{uv}$ for all internal branches in the tree. We  visit the internal nodes of the tree using a post-order traversal, noting that $a_{uv}$ and $b_{uv}$ have already been computed for all branches $uv$ where $u$ is an OTU.  When visiting node $u$, we first conduct a pre-order traversal of the subtrees $U_1$ and $U_2$ to compute and store the path lengths $d_{xu}$ for all $x \in U_1$ and $d_{yu}$ for all $y \in U_2$. We then evaluate \eqref{eq:arec} directly  in $O(|U_1||U_2|)$ time. We evaluate \eqref{eq:brec} in constant time.  Since $|U_1| |U_2|$ equals the number of pairs of OTUs with least common ancestor $u$, summing this over all internal nodes gives $\frac{n(n-1)}{2}$, the total number of pairs of OTUs. Hence the running time required to implement the recursions over all nodes in the tree is $O(n^2)$. The algorithm only requires $O(n)$ memory for this stage.

The final step, determining the actual optimum, takes only constant time per branch, or $O(n)$ time in total.
\end{proof}
\begin{figure}[ht]
	\begin{center}
	\includegraphics[width=0.9\textwidth]{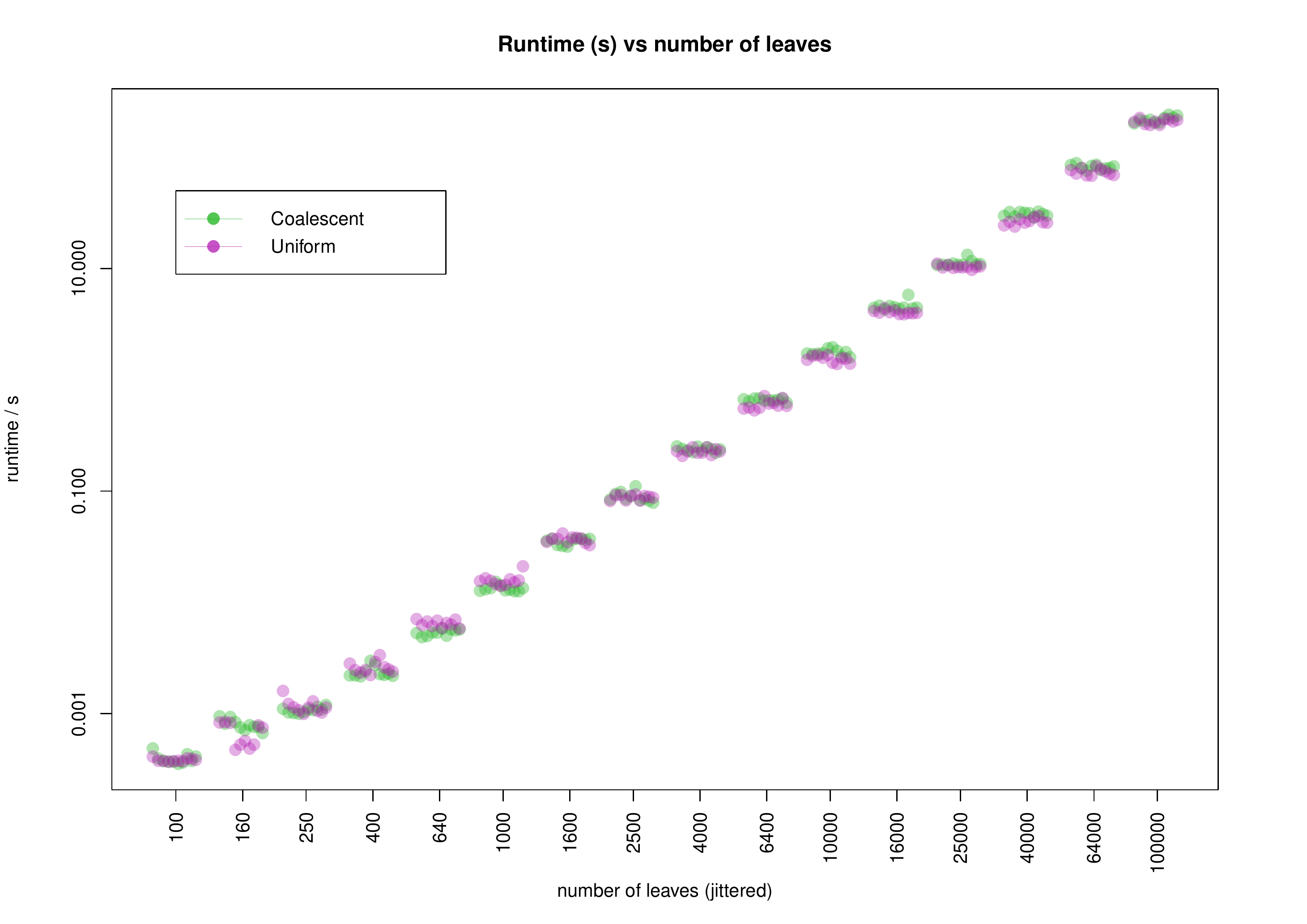}
	\end{center}
	\caption{Estimate time of our rooting algorithm as a function of $n$.}
	\label{fig:timing}
\end{figure}

\section{Experimental performance}

We have implemented our algorithm in open source C++, and code is available from either of the authors. The algorithm is fast. Figure~\ref{fig:timing} gives average running times for trees with 1000, 10000, and 100000 OTUs. For each replicate we simulated 10 trees drawn from the Yule distribution and 10 trees generated uniformly. 
Simulations were carried out on a  Mac PowerBook pro 3GHz Intel core i7 with 16Gb RAM.

We note that it only takes a few minutes to determine the MAD root for trees with 100,000 taxa. For smaller trees, the running time is negligible, meaning that the MAD root method could be applied to all trees in a large file with little computational cost, for example to incorporate branch length uncertainty explicitly into root location.

\bibliographystyle{plain}

\end{document}